\newcommand{\bbR}{\mathbb{R}}
\newcommand{\bbZ}{\mathbb{Z}}
\newcommand{\ra}{\rightarrow}
\newcommand{\one}{\bold{1}}
\newcommand{\ep}{\varepsilon}
\theoremstyle{definition}
\newtheorem{theorem}{Theorem}
\newtheorem{proposition}[theorem]{Proposition}
\newtheorem{corollary}[theorem]{Corollary}
\newtheorem{aside}[theorem]{Aside}
\title{Sampling from a Gaussian distribution conditioned on the level set
  of a piecewise affine, continuous function}
\author{Jesse Windle \\ \url{jesse@bayesfactor.net}}
\begin{document}

\maketitle

\begin{abstract}
  We consider how to use Hamiltonian Monte Carlo to sample from a distribution
  whose log-density is piecewise quadratic, conditioned on the sample lying on
  the level set of a piecewise affine, continuous function.
\end{abstract}

\section{Introduction}

Our motivation comes from modeling monocot (e.g. maize) root growth in a piecewise
linear fashion and inferring the path of a root given a point along its
trajectory.  We defer details of this application to another paper.  Our focus
here is a question that arose when studying that problem.

Suppose that a priori $X \sim N(\mu, I_n)$.  Subsequently, we learn that
$\ell(X) = 0$ where $\ell$ is a piecewise affine, continuous function.  How
does one sample $(X | \ell(X) = 0)$?

The answer to this question extends to the more general problem, which can be
roughly phrased as: how does one sample $(X | \ell(X) = 0)$ when, marginally,
$X$ comes from a density proportional to
\[
\exp \Big[ -\frac{1}{2} x' M_i x + r_i' x + k_i \Big], \text{ if } x \in R_i, \;
i=1, \ldots K
\]
where $R_i \subseteq \bbR^n$ and $R = \cup_{i=1}^K R_i$ is a connected set.
Notice that, in effect, this provides a way to sample any distribution,
approximately, on discretized volumes, conditioned upon an observation lying on
surface embedded in $\bbR^n$, which has been discretized into linear pieces.

We will use Hamiltonian Monte Carlo (HMC) to do this.  To be clear: our
application of interest is the discretized version, not the approximation of
something continuous, since we can use HMC to simulate from a smooth density.
In that case, one must discretize an ODE.  We have, in effect, made a trade for
what must be discretized, sampling exactly from the above discretization by
being able to solve the dynamics of motion exactly.

Part of the aim of this paper is to encourage a physics-based intuition of
particle dynamics for HMC, specifically as it relates to how one deals with
truncations and step changes in densities.  To that end, we have synthesized the
work of \cite{Pakman2014}, who deal with truncations, and
\cite{MohaselAfshar2015}, who deal with steps, and then place that within the
context of a particle moving on a manifold within a higher dimensional space.

Accompanying this paper is a Python package and example Jupyter notebooks
(\cite{ctgauss}) to show how to use the software.  In many modeling scenarios, the
excellent HMC software Stan (\cite{Stan2023}) will work, even with truncations
or for some non-smooth densities (see e.g. \cite{StanFunctions}, section 3.7),
and we encourage you to use that software.  However, to the best of our
knowledge and as of this writing, Stan may be unable to accomodate some use
cases, like arbitrary truncations by hyperplane in multiple dimensions or when
the piecewise affine, continuous function used for conditioning is complex.

\section{Background}

We will be using HMC to generate samples from the distribution above.
Hamiltonian Monte Carlo is inspired by Hamiltonian mechanics.  (\cite{Kibble2004}
is a well-known introductory text to classical mechanics.  I found the HMC notes
by \cite{Vishnoi2021} to be a useful, quick introduction and borrow from them
here.  \cite{Betancourt2017} is a more extensive introduction.  \cite{Neal2011}
is another useful review for some of the key concepts.)

In physical systems that are governed by a time-independent potential, the total
energy can be written as
\[
E = V(r) + \frac{1}{2m} \|m \dot r\|^2,
\]
which is constant.  It is often the case that the positions and momenta of the
system can be written in only a few coordinates, called generalized positions,
$x$ and momenta, $s$ so that $E = H(x, s)$.  One can show using variational
methods, that the evolution of a particle in $xs$-phase space is governed by
\begin{align*}
  \dot x & = \nabla_s H \\
  \dot s & = -\nabla_x H.
\end{align*}
Let $\varphi_\tau$ be the map $(x(\tau), s(\tau)) = \varphi_\tau(x_0, s_0)$ that represents the
state of the system after time $\tau$ and starting at $(x_0, s_0)$.  If one defines
a distribution in phase space so that its density is proportional to
$\exp \{ -H(x,s) \}$, and if one samples random variables $(X, S)$ from this
distribution, then the map of the evolution $\varphi_\tau(X, S)$ will preserve
the distribution, where $\tau$ is any time step.  In other words, the distribution
is stationary under the mapping $\varphi_\tau$.

For the purposes of random variate generation, one can consider
$H(x,s) = V(x) + T(s)$ where $V(x)$ is the negative logarithm of a distribution
of interest and $T(s)$ is the negative logarithm of a proposal distribution.  We
can construct a Markov chain as follows.  Let $X_0$ be an arbitrary
starting position.  Let $X_n = \varphi_\tau(X_{n-1}, S_{n-1})$ where $S_n$ is a
random variable sampled from the distribution derived from $\exp(-T(s))$.  When
$T$ and $V$ satisfy regularity conditions and $\tau$ is an acceptable step size,
$(X_n)_{n=1}^{\infty}$ is a Markov chain that converges to the distribution of interest.

\subsection{Example: Gaussian distribution}
\label{sec:gaussian-example}

Let us consider sampling from a Gaussian random variable using HMC.  This may
seem trivial, since we already known how to sample from a Gaussian distribution
without HMC, but it will be instructive as well as useful for later.  Suppose
\begin{gather*}
  V(x) = \frac{1}{2} x' M x - r'x \\
  T(s) = \frac{1}{2} s' M^{-1} s.
\end{gather*}
The Hamiltonian for the system is then
\[
  H(x, s) = \frac{1}{2} x' M x - r'x + \frac{1}{2} s' M^{-1} s
\]
and the evolution in this phase space is given by
\begin{align*}
\dot x & = \frac{dH}{ds} = M^{-1} s \\
\dot s & = -\frac{dH}{dx} = -M x + r.
\end{align*}
Note that $s$ in this case \emph{is not} the actual momenta, but rather a linear
combination of the particle's momentum.  Together this implies that the motion
of $x$ is governed by
\[
\ddot x + x = \mu 
\]
where $\mu = M^{-1} r$, which has the solution
\begin{equation}
  \label{eqn:equations-of-motion}
x_i(t) = \mu_i + a_i \sin(t) + b_i \cos(t)
\end{equation}
with $b_i = x_i(0) - \mu_i$ and $a_i = \dot x_i(0) = -(M^{-1})_{i\cdot} s_0$.
We can sample $s$ from the correct marginal distribution, $s \sim N(0, M)$, to
generate the Markov chain, or by the transformation $\dot x = -M^{-1}$, we see
that we can also sample $\dot x(0)$ directly as $N(\mu, M^{-1})$ --- yes, we are
saying we can sample $N(0, M^{-1})$ to generate a Markov chain whose stationary
distribution is $N(\mu, M^{-1})$.  If we chose $\tau = \pi/2$, we can even
generate independent and identical samples.  This may seem rather tautological
at the moment, but it will be very useful when dealing with truncated Gaussian
random variables or piecewise quadratic likelihoods.

\subsection{Example: piecewise (or truncated) Gaussian}

To see why where HMC becomes useful, let us now consider a potential defined
piecewise, so that
\begin{equation}
  \label{eqn:canonical-potential}
  V(x) =
  \begin{cases}
    \frac{1}{2} x' M_1 x - r_1' x + k_1, \; \text{ if } f'x + g > 0, \text{
      region 1} \\
    \frac{1}{2} x' M_2 x - r_2' x + k_2, \; \text{ if } f'x + g < 0, \text{
      region 2}
    \end{cases}.
\end{equation}
Previously, the constant term in the polynomial did not matter, but now it does
since that will weight each piece differently.  Also, we could let e.g.
$k_2 = \infty$ to effectively truncate the distribution.

We can now evolve a particle by Hamiltonian dynamics, but we have to be mindful
of the boundary.  The key question is: when does the particle run into the
boundary?  Following \cite{Pakman2014}, let
\begin{align*}
  K(t) & = f' x(t) + g \\
         & = f'a \sin(t) + f' b \cos(t) + (f' \mu + g) \\
         & = u \cos(t + \phi) + h
\end{align*}
where
\[
u = \sqrt{(f'a)^2 + (f'b)^2},
\]
\[
  \tan \phi = \frac{f'a}{f'b}
\]
and
\[
h = f' \mu + g.
\]
The particle hits a boundary at the first time that $K(t) = 0$, which can only
happen if $u > |h|$.  (As a technical matter, when computing $\phi$, one should
use the \texttt{atan2} function, which returns the angle in $[-\pi, \pi]$ so
that $f'a$ and $f'b$ have the correct signs.)

We want to find the time $t > 0$ in which a particle hits a boundary, as it is
going in a direction that would exit the region.  To that end, if $u_j < |h_j|$,
let $\tau = \infty$, otherwise let
\[
  \tau = \min \Big\{ t^{(n)} = \arccos (-h / u) - \phi + \pi n :
  n \in \bbZ, \; t^{(n)} > 0, \; K'(\tau^{(n)}) <  0 \Big\}.
\]
To clarify: first, we have to find positive $\tau$ hence the adjustment by
multiples of $\pi$.  Second, we want to make sure that the particle is exiting
the boundary, hence the condition on the derivative.  If $\tau \geq \tau_{max}$,
$\tau_{max}$ the max step size, then we take a step $\tau_{max}$.  If
$\tau < \tau_{max}$, then we need to adjust the momentum of the particle as it
encounters the boundary and then let it continue on its way.

% In practice, we can do the
% following to compute $\tau$:
% \begin{align*}
%   \tau_j & = \arccos (-h_j / u_j) - \phi_j \\
%   \tau_j & = \tau_j \text{ if } \tau_j > 0 \text{ else } \tau_j + \pi \\
%   \tau_j & = \tau_j \text{ if } \tau_j < \pi \text{ else } \tau_j - \pi \\
%   \tau_j & = \tau_j \text{ if } K'_j(\tau_j) < 0 \text{ else } \tau_j + \pi.
% \end{align*}
% Lastly, we can find $j^* = \argmin{j} \{ \tau_j \}$ and $\tau^* = \tau_{j^*}$.
% If $\tau$ is less than the time we have set to evolve the particle, then we
% reflect off the boundary.  Otherwise, we resample the momentum.

Suppose the particle starts in region 1 and at time $\tau < \tau_{max}$ it encounters the
boundary.  The intuition of how to approach the problem is as follows.  Suppose
$V(x)$ is piecewise constant so that only $k_1$ and $k_2$ are non-zero.  We can
think of this potential as a plain (region 1) and a plateau (region 2), both of
which are totally flat with a constant force of gravity perpendicular to both
surfaces.  We can relax the problem to make the vertical face of the plateau
slightly tilted, and even smooth.  The key is that the shape of the relaxation
only changes in the direction orthogonal to the boundary.  Under that
assumption, the value that clearly matters is the momentum in this direction.
If there is enough momentum to overcome the potential energy due to gravity, a
particle will ascend the plateau.  If it does not, then it will return to region
1.  As we make the relaxation steeper and steeper, the time in spends on this
portion of the curve decreases, so that in the limit, it will either ascend the
plateau and lose momentum in the direction orthogonal to the boundary, or it
will reflect.

The direction orthogonal to the boundary, pointing in, is $u = f / \|f\|$, hence the velocity
(and momentum since $m=1$) is $v_\perp = u' \dot x(\tau)$.  The associated
kinetic energy is $E_\perp  = \frac{1}{2} v_\perp^2$.  Let
\[
  \Delta V = \lim_{z \ra x(\tau), f'z + g < 0} V(z) -  \lim_{z \ra x(\tau), f'z
    + g>
    0} V(z).
\]
If $E_\perp > \Delta V$, then the particle will move into the other region with
a subsequent reduction in momentum in the $u$ direction of
\[
v_{new} = \sqrt{2 (E_\perp - \Delta V)},
\]
which is the key point of \cite{MohaselAfshar2015}, so that
\[
\dot x(\tau^+) = (I - u u') \dot x(\tau^-) + v_{new} (-u),
\]
(the minus in front of $u$ ensures we are pointing into the next region),
otherwise all of the momentum in question will be reflected;
\[
  \dot x(\tau^+) = (I - uu') \dot x(\tau^-) - v_\perp u,
\]
which we can summarize as
\[
  \dot x(\tau^+) = \dot x(\tau^-) + u
  \begin{cases}
    - 2 v_\perp, \; v_\perp^2 < 2 \Delta V \\
    -v_\perp - \sqrt{v_\perp^2 - 2 \Delta V}, v_\perp^2 \geq 2 \Delta V.
    \end{cases}
  \]

Note, when $k_2 = \infty$, the boundary is effectively a truncation of the
distribution, and you are guaranteed to reflect.

Given this new velocity, we can re-initialize the dynamics, but now in
the appropriate region with initial conditions $(x(\tau), \dot x(\tau^+))$ and
max running time $\tau_{max} - \tau$.

\section{Parameterization of distribution}

First, let us define the distribution and its parameters. In the previous
section, we considered a truncated Gaussian distribution. Here, we consider a
Gaussian, subject to a specific type of piecewise linear contraints. We
can think of this as starting with the marginal distribution
\begin{gather*}
  X \sim \exp(-V(x)) \\
  V(x) = \begin{cases}
    \frac{1}{2} x' M_j x - {r_j}' x + k_j, \; x \in R_j \\
    R_j = \{ x: {F_j} x + g_j \geq 0 \} \\
    j = 1, \ldots, J
  \end{cases},
\end{gather*}
and then wanting to sample $(X | \ell(X) = 0)$ where $\ell$ is a continuous
function that can be written as
\[
\ell(x) = {A_j}' x + y_j, \text{ if } x \in R_j.
\]

That provides a general sense of what we want to do, but we need to restrict the
problem further in order to avoid degenerate cases and provide a succinct
parameterization.  To begin, let
\[
{f_i}' x + g_i \geq 0, \; i = 1, \ldots, n
\]
be a collection of hyperplanes.

Let the n-tuple $s_j \in \{-1, 0, 1\}^n$ define the $j$th region, $R_j$ by
imposing the constraints
\[
s_{ji} \Big[ {f_i}' x + g_i \Big] \geq 0, \; i = 1, \ldots, n.
\]
When $s_{ji} = 0$, then the constraint is not active; if it is non-zero it is
active and its sign determines which half of the half-space to use.  We also
want to keep track of the intended transitions.  To that end, let the magnitude
of $s_{ji}$ indicate the index of the adjoining region and when
$|s_{ji}| = j$, we will assume it represents a hard boundary.  Let $L$
is the $J \times m$ matrix of these $m$-tuples so that $s_{ji} = L_{ji}$ ($L$
for lookup).  If a particle can move from $j$ to $j^*$ and visa versa, then
there should be some constraint $i$ such that $j^* = |L_{ji}| \neq 0$ and
$j = |L_{j^*i}| \neq 0$.

Thus, to recap, the parameterization includes:
\begin{itemize}
\item The potential $V$: that is $(M_j, r_j, k_j)$, where $M_j$
  is a $d \times d$ matrix, $r_j$ is an $\bbR^d$ vector, and $k_j$ a scalar, for
  $j=1, \ldots, J$.
\item $f_i, i = 1, \ldots, m$, vectors in $\bbR^n$ used to construct the
  hyperplane boundaries, or $F$ a $m \times n$ dimensional matrix whose rows are
  the $f_i$.
\item $g_i, i = 1, \ldots, m$, the scalars that control the offsets of the
  hyperplanes, or $g$ an $m$-dimensional vector.
\item $s_j, j = 1, \dots J$, the m-tuples defining the regions, or $L$ a $J
  \times m$ dimensional matrix whose rows are the $m$-tuples.
\item The continuous, piecewise affine function $\ell$: that is $A_j$ and
  $y_j$, where $A_j$ is a $n \times d$ and $y_j$ is a $\bbR^d$
  vector, $j=1, \ldots, J$.
\end{itemize}

\subsection{Restrictions on parameters}

Since we will be using HMC, as a technical matter, it will be helpful to avoid
defining regions so that a face of one region abuts the faces of two adjacent
regions.  This will prevent us from having to search through all the regions to
determine which region the particle is moving into.  We can avoid this by
imposing: for each $j$, for each $i$, there is at most one $j'$ such that
$0 \neq s_{ij} = -s_{j'i}$.  We also want $\ell$ to be continuous.  The user
should define the regions so that the interior of $R$ is connected.

\section{Dynamics}

Let us return to the potential (\ref{eqn:canonical-potential}):
\[
  V(x) =
  \begin{cases}
    V_1(x) := \frac{1}{2} x' M_1 x - r_1' x + k_1, \; \text{ if } f'x + g > 0 \;
    (R_1) \\
    V_2(x) := \frac{1}{2} x' M_2 x - r_2' x + k_2, \; \text{ if } f'x + g < 0
    \; (R_2)
    \end{cases}.
\]
and consider the negative log density of the marginal
distribution of interest.  Our aim is to sample from conditional upon $\ell(x) = 0$ where
\[
  \ell(x) = 
  \begin{cases}
  % f'x & = 0 \; \text{(hyperplane) } \\
  A_1'x + y_1, \; f'x + g > 0 \; (R_1) \\
  A_2'x + y_2, \; f'x + g < 0 \; (R_2)
  \end{cases}
\]
and $\ell$ is continuous.  There are two issues to address: (i) what equations
should drive the dynamics of the particle, and (ii) what happens when a particle
hits the boundary?

\subsection{Dynamics within a region}

Suppose a particle's dynamics are governed by the unconstrained force that has the
form
\[
F(x) = -Mx + r,
\]
which corresponds to a Gaussian potential with precision $M$ and mean
$\mu = M^{-1}r$, but that we then constrain the motion of a particle to the
plane $A'x = y$.  How do we adjust the force and subsequent potential to
accurately represent the particle's dynamics, and how can we connect it to the
unconstrained potential?

Let $Q_1$ be the unitary column space of $A$ and $Q_2$ be the unitary null
space.  We need to make two observations.  First, the force within the
constrained space is
\[
- Q_2 Q_2' [ Mx - r ].
\]
Second, if we decompose $x$ as $Q_1 Q_1' x + Q_2 Q_2' x$, then the portion
$Q_1 Q_1' x$ is constant.  Specifically, letting $A = Q_1 R$, we see that
$Q_1' x = {R'}^{-1} y =: z_1^*$, so that $Q_1 Q_1' x = Q_1 z_1^*$.

Hence the force of interest is
\begin{align*}
  F(x) & = - Q_2 Q_2' M Q_2 Q_2' x - Q_2 Q_2' M Q_1 Q_1' x + Q_2 Q_2' r \\
  & = - Q_2 (Q_2' M Q_2) Q_2 x + Q_2 Q_2' [ r - M Q_1 z_1^*].
\end{align*}
Letting $\tilde r = [r - M Q_1 z_1^*] = M[\mu - Q_1 z_1^*]$, the prior expression implies
the potential
\[
V(x) = \frac{1}{2} x' Q_2 (Q_2' M Q_2) Q_2 x - (Q_2 Q_2' \tilde r)' x.
\]

To write down the Hamiltonian for the purposes of generating random variates, we
need to chose a kinetic energy term that makes generating the dynamics easy.

% % This is too much detail
% To gain some intuition as to a good choice, let us return the simplest scenario
% with an isotropic harmonic oscillator.  In that case
% \[
% H(x, s) = \frac{1}{2} x' x + \frac{1}{2} s' s.
% \]
% This Hamiltonian leads to the dynamics
% \[
% \frac{d}{dt} {\begin{pmatrix} x \\ s \end{pmatrix}} = \begin{pmatrix} 0 & I \\
%   -I & 0 \end{pmatrix} {\begin{pmatrix} x \\ s \end{pmatrix}}.
% \]
% If we do a change of variables so that $x = M \tilde x$, then the dynamics are
% governed by the matrix
% \[
% \begin{pmatrix} 0 & M^{-1}  \\
%   -M & 0 \end{pmatrix}
% \]
% so that the implied Hamiltonian is
% \[
% H(\tilde x, s) = \frac{1}{2} \tilde x' M \tilde x + \frac{1}{2} s' M^{-1} s.
% \]
% (Similarly, you could chose $x = M^{1/2} \tilde x$ and $s = M^{-1/2} \tilde s$
% as the transformation.)  In any case, you arrive at
% \[
% \ddot {\tilde x} + \tilde x = 0
% \]
% as the dynamics, which isn't surprising, given our knowledge of what the
% dynamics of $x$ should be to begin with and the transformation we chose.

From section \ref{eqn:canonical-potential}, we know that $s' M^{-1}s $ is a
good choice, where $M$ is the potential.  This suggests we should choose a
pseudo inverse of $Q_2 (Q_2' M Q_2)^{-1} Q_2'$ when constructing the kinetic
energy term, which we also know is a good choice by solving the problem in lower
dimensional coordinates.  To that end, we want
\[
H(x, s) = \frac{1}{2} x' (Q_2 (Q_2' M Q_2) Q_2') x - (Q_2Q_2' \tilde r)' x +
\frac{1}{2} s' Q_2 (Q_2' M Q_2)^{-1} Q_2 s,
\]
which leads to the dynamics
\[
\ddot x + Q_2 Q_2' x = Q_2 (Q_2' M Q_2)^{-1} Q_2' \tilde r.
\]
This has homogeneous and particular solutions
\[
x_h(t) = a \sin(t) + b \cos(t)
\]
and
\[
x_p(t) = Q_2(Q_2' M Q_2)^{-1} Q_2' \tilde r + Q_1 z_1^*.
\]
Then the solution is $x = x_h + x_p$ where
\[
b = x(0) - x_p
\]
and
\[
a = \dot x(0),
\]
with the requirement that $Q_1' \dot x(0) = 0$.  To generate a sample from the
marginal of $s$, we can sample $s \sim N(0, Q_2 (Q_2' M Q_2) Q_2')$, which is
like endowing $\dot x(0)$ with initial velocity from $\dot x(0) \sim N(0, Q_2 (Q_2' M
Q_2)^{-1} Q_2')$.

\subsection{Change in dynamics between regions}
\label{sec:change-in-dynamics}

To begin, let us consider a particle that is traveling free of any force, but
whose movement is restricted to $R_1$ and $R_2$.  Understanding the velocity of
the particle, and how we can decompose it is key.  Keep in mind that, if the
particle is moving from region 1 to region 2, then the particle's velocity must
be in the null space of $A_1'$, and when it crosses the boundary, it will be
in the null space of $A_2'$.  It stands to reason that the components of the
velocity that are shared between each region will not change.

Let us assume we can find a semi-orthonormal matrix $U_0$ and
unit vectors $u_i, i=1,2$ so that $u_i$ is orthogonal to $U_0$ and the span of
$u_i$ and $U_0$ is the null space of $A_i'$, $i=1, 2$.  Further, assume there is
another semi-orthonormal matrix $U_c$ such that $U_c' U_0 = 0$ and
$u_i'U_c = 0$, $i=1,2$.  The justification is given in Proposition
\ref{thm:null-space-decomposition} and Corollary \ref{thm:subspace-decomposition} of
Appendix \ref{sec:subspace-decomp}.  To orient $u_1$ and $u_2$, assume they are
pointing into each region so that $f' u_1 > 0$ and $f' u_2 < 0$.

How does a particle move from $R_1$ into $R_2$?  Our approach is to set up a
hard boundary so that a particle traveling in $R_1$ has an inelastic collision
that bounces it directly into $R_2$.  As before let $\dot x(\tau^-)$ be the
particle's velocity when it encounters the boundary.  There will be an
instantaneous change in velocity to $\dot x(\tau^+)$ where
\begin{equation}
  \label{eqn:glance-through-boundary}
x(\tau^+) = (I -  u_\perp u_\perp') x(\tau^-) - u_\perp u_\perp' x(\tau^-).
\end{equation}
where
\[
u_{\perp} = \frac{u_2 + u_1}{\|u_2 + u_1\|}.
\]
It will be helpful to define
\[
u_{\|} = \frac{u_2 - u_1}{\|u_2 - u_1\|}.
\]
as well.  Geometrically speaking, $u_{\perp}$ is the vector that bisects $u_2$
and $u_1$ and $u_{\|}$ is the vector the bisects $u_2$ and $-u_1$.  Because
$u_1$ and $u_2$ are orthogonal to $U_0$ and $U_c$, we have that the
concatenation of $U = [u_\perp \; u_\| \; U_0 \; U_c]$ (or
$U =[u_\perp \; U_0 \; U_c]$ when $u_1 = -u_2$) is an orthonormal matrix and
$U'U = I$, which tells us we can also decompose
(\ref{eqn:glance-through-boundary}) as
\begin{displaymath}
\dot x(\tau^+) = - u_{\perp} u_{\perp}' \dot x(\tau^-) + u_{\|} u_{\|}' \dot x(\tau^-) +
U_0U_0' \dot x(\tau^-) + \underbrace{U_c U_c' \dot x(\tau^-)}_{=0}.
\end{displaymath}
The last term is eliminated because $\dot x(\tau^-)$ is in the null space of
$A_1'$.

To see that the particle does indeed take its trajectory in $u_2$, we can
decompose $\dot x(t^-)$ as $-\alpha u_1 + w$ where $w \in U_0$ and $\alpha > 0$,
since the particle is leaving the region.  The subsequent trajectory is
\begin{gather*}
  (u_2 + u_1) \frac{(u_2+u_1)'}{\|u_2+u_1\|^2} \alpha u_1 -
  (u_2 - u_1) \frac{(u_2-u_1)'}{\|u_2 - u_1\|^2} \alpha u_1 + w \\
  = \alpha (u_2 + u_1) \frac{(u_2'u_1 + 1)}{2(1+u_2'u_1)} -
  \alpha (u_2 - u_1)\frac{u_2'u_1 - 1}{2(1-u_2'u_1)} + w
  \\
  = \alpha u_2 + w.
\end{gather*}

In other words, we can project the velocity of the particle into $-u_1$ (the
direction of exit) and then transfer all of that velocity to $u_2$ and keep the
remaining components of velocity the same.  Since the particle is traveling in
$R_1$ to begin, with, this is equivalent to removing the velocity due to $u_1$
and then redirecting it to $u_2$.  That is
\begin{align*}
  \dot x(\tau^+) & = (I - u_1 u_1') \dot x(\tau^-) + u_2 (-u_1)' \dot x(\tau^-).
\end{align*}

When the potential is continuous at $x(\tau)$, then the total energy will not
change, since momentum is conserved.  In order to tackle potentials that are
defined piecewise, we must consider how the momentum of the particle changes.

Instead of considering a flat plain and plateau, consider a plain that is tilted
in the direction of $u_1$ and a plateau that is tilted in the direction of
$u_2$.  The sheer face of the plateau is predominantly in the direction of
$u_\perp$, so that it is nearly vertical.  Now imagine that we relax the sheer
face so that it is steep and that there is a force of gravity in the direction
of $-u_\perp$ and only in the portion of the relaxation.  In order for a
particle to ascend to the plateau, we must have the momentum in $u_1$, $v_1 =
u_1' \dot x(\tau^-)$, and its
corresponding energy exceed the potential energy we must climb:
\[
E_{u_1} := \frac{1}{2} (\dot x(\tau^-)' u_1)^2 > \Delta V =: V_2(x(\tau^-)) - V_1(x(\tau^-)).
\]
The drop in energy is then
\[
E_{new} = E_{u_1} - \Delta V
\]
and the subsequent momentum in the $u_2$ direction is
\[
\sqrt{2 E_{new}} u_2.
\]
In terms of $v_1$, we have
\[
\sqrt{2 E_{new}} = |v_1| \sqrt{1 - \frac{2 \Delta V}{v_1^2}} = -v_1 \sqrt{1 - \frac{2 \Delta V}{v_1^2}},
\]
which clarifies that we lose a fractional portion of the speed in the $u_1$
direction as it relates to the change in potential energy.  If there is not
enough energy to ascend the potential, then the momentum in the direction of
$u_1$ is effectively reversed.  That is
\[
  x(\tau^+) = (I - u_1 u_1') \dot x(\tau^-) +
  \begin{cases}
    u_2 \sqrt{2 E_{new}}, \; \text{ if } E_{new} \geq 0 \\
    -u_1 (u_1' \dot x(\tau^-)), \; \text{ if } E_{new} < 0
  \end{cases}.
  % E_{new} > 0 \iff - \dot x(\tau^-)' u_1 \geq \sqrt{2 \Delta V}
\]

\begin{aside}
  Given the above equation, there is a useful computational trick to note.
  Assume that the particle is traveling in region 1.  We can make an artificial
  boundary --- that has no effect --- by assuming that $\Delta V = 0$ and
  $u_2 = - u_1$ where $u_1$ is a vector in the null space of $A_1'$.  Similarly,
  we can force there to be a hard boundary by taking $\Delta V = 0$ and
  $u_2 = u_1$ where $u_1$ is a vector in the null space of $A_1'$.  These
  interpretations are useful when vectorizing the dynamics for many particles,
  since algorithm above can then be used for all potential scenarios: no
  boundary, a hard boundary, or a transition boundary.
\end{aside}

\section{Examples}

\subsection{One-norm}

Here we consider sampling a Gaussian, given that the sample must live on a level
set of the one-norm.  Specifically, we want to sample
\[
X \sim \Big( N(0, \Sigma) \; | \: \|X\|_1 = 1 \Big).
\]
Figure \ref{fig:onenorm} shows these samples for $\Sigma = I_3$ and $\Sigma$
diagonal with diagonal elements $(0.1, 10., 10.)$.

\begin{figure}[h]
  \includegraphics[width=6in]{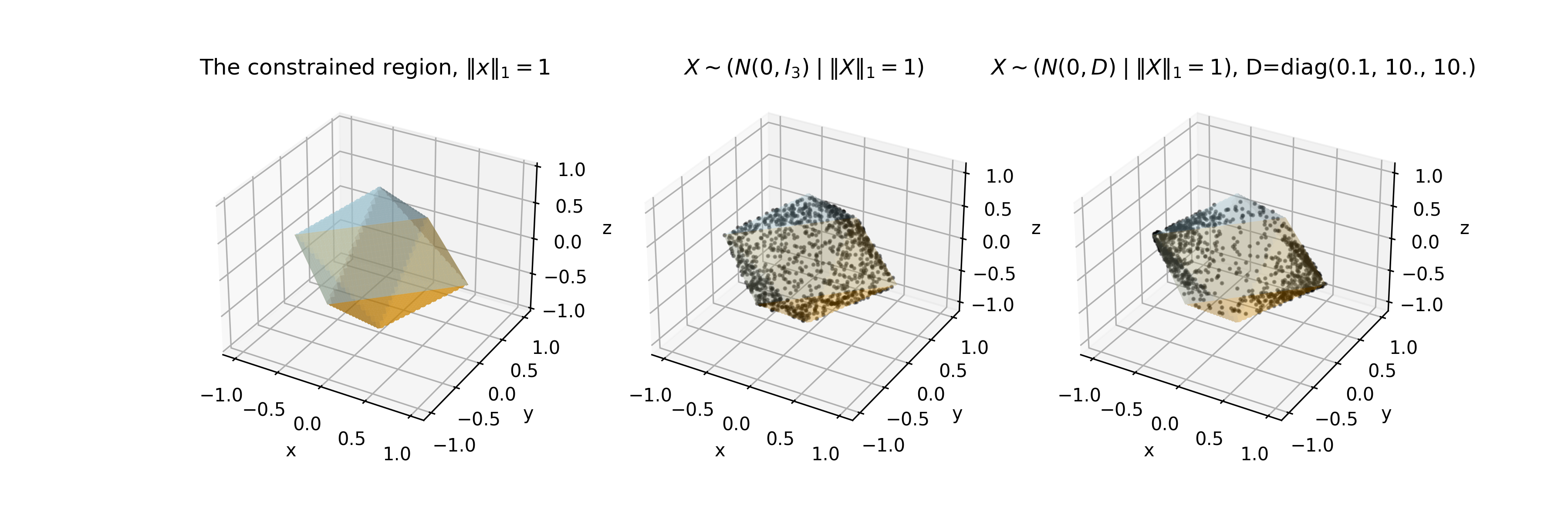}
  \caption{Sampling $X \sim N(0, \Sigma)$ given $\|X\| = 1$.  The left image is
    just the constraint.  The middle image shows a sample for $\Sigma = I_3$.
    The right image shows a sample for $\Sigma$ diagonal with diagonal elements
    $(10, 0.1, 0.1)$.}
  \label{fig:onenorm}
\end{figure}

\subsection{Polygonal top}

Extending the one norm example, here we have constructed a regular polygon in
the xy-plane and then connected its vertices to the z-axis above and below the
xy-plane to create a top.  Figure \ref{fig:ntop} shows the distribution of
$N(0, D)$ restricted to a six-sided top for various covariance structures $D$.
In this case

\begin{figure}[h]
  \includegraphics[width=6in]{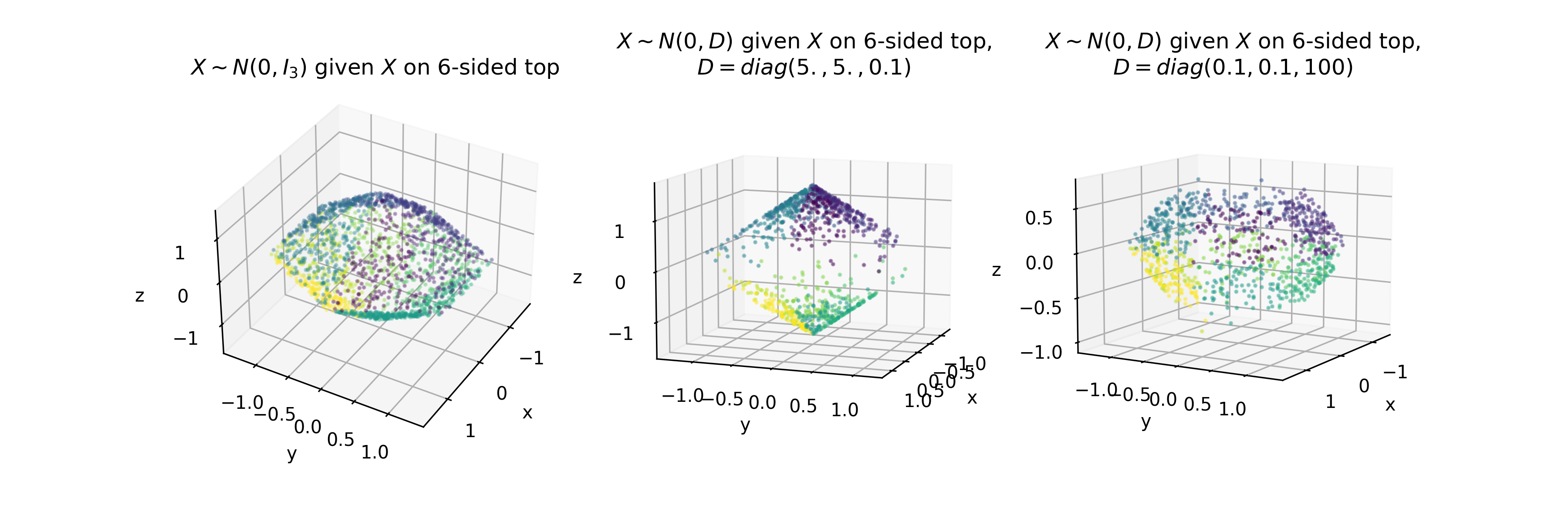}
  \caption{Sampling $X \sim N(0, D)$ given that $X$ lies on a 6-sided top for
    various covariance structures.  The colors represent different regions in
    our partition of $\ell$.}
  \label{fig:ntop}
\end{figure}

\subsection{Positive part}

Here we consider sampling on the subspace that arises when using the $(\cdot)^+$
function.  Our motivation comes from modeling root growth in a piecewise linear
fashion.  We consider a root that has $K=3$ kinks in it.  The depth of the
root after traveling a distance of $r > 0$ in the x-direction is
\[
\ell(\Delta x) = \sum_{i=1}^{K+1} \Delta m_i (r - \sum_{j=1}^i \Delta x_i)^+
\]
where $\Delta m_i, \i=1, \ldots, K+1$ is taken as a parameter here that
determines how the slope of the root changes.  The variable we will be sampling
is $\Delta x_i \geq 0, i = 1, \ldots, 3$.  The constraint $\ell(\Delta x) = d$
can be described by the three regions shown in Figure
\ref{fig:positive-part-regions}.

\begin{figure}[h]
  \includegraphics[width=6in]{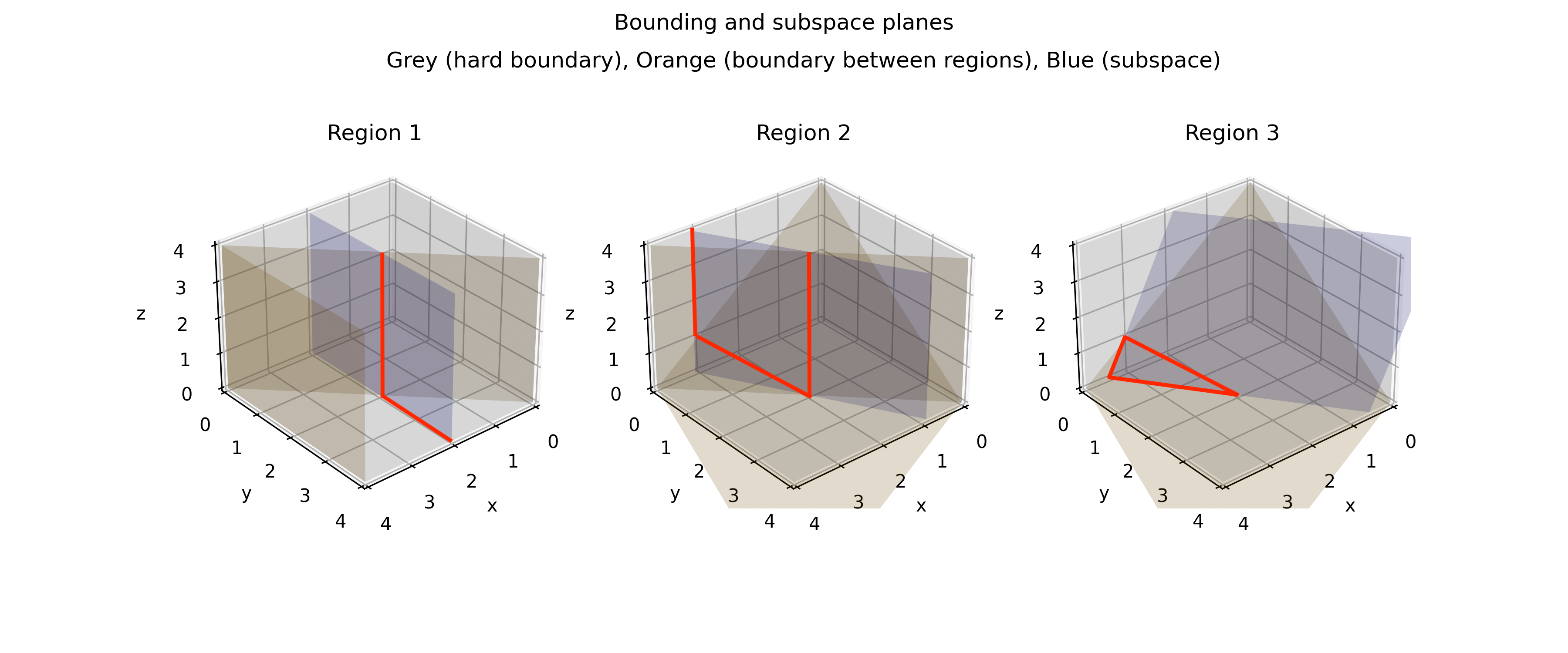}
  \caption{The regions used to define $\ell(\Delta x) = d$.}
  \label{fig:positive-part-regions}
\end{figure}

Figure \ref{fig:positive-part-samples} shows samples of
\[
\Delta X \sim \Big( N(\one_3, I_3) \; \Big| \; \ell(\Delta X) = d \Big).
\]

\begin{figure}[h]
  \begin{center}
    \includegraphics[width=5in]{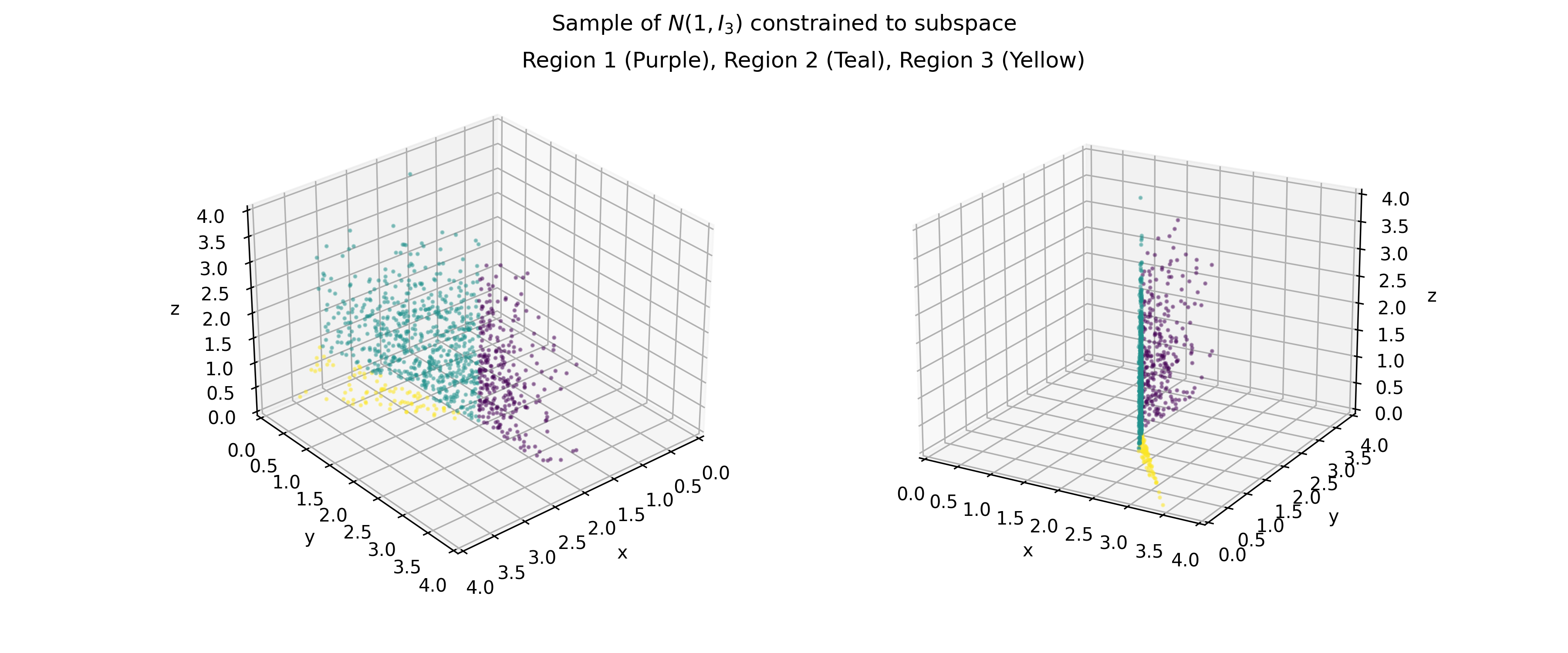}
  \end{center}
  \caption{Samples of $\Delta X \sim ( N(\one_3, I_3) \; | \; \ell(\Delta X) = d)$}
  \label{fig:positive-part-samples}
\end{figure}

\section{Conclusion}

Here we have shown how to start with a distribution that is piecewise quadratic
in the logarithm and sample from that distribution, conditional upon the sample
lying on the the level set of a piecewise affine, continuous function using
Hamiltonian Monte Carlo.

This relies heavily on knowing the exact dynamics of a particle under a
quadratic potential, which can then easily accomodate truncations and step
changes in densities.  To that end, we have synthesized the work of
\cite{Pakman2014}, who deal with truncations, and \cite{MohaselAfshar2015}, who
deal with steps, and then placed that within the context of a particle moving on
a manifold within a higher dimensional space.

We have written a Python package implementing these methods (\cite{ctgauss}).
However, for many use cases, slightly relaxing the requirement that the sample
come from a lower dimensional subspace enables the use of the HMC software Stan,
even with univariate truncations and for some non-smooth densities.  We
encourage that route, if possible.  If one needs to consider complex truncations
or piecewise affine conditioning conditions, then our software may be useful.

% If you have a bibliography.
% The file withe bibliography is name.bib.
\bibliographystyle{abbrvnat}
\bibliography{truncated-constrained-gaussian}{}

\appendix

\section{Intersection of subspaces}
\label{sec:subspace-decomp}

Let
\begin{equation}
  \label{eqn:subspace-constraint}
  \ell(x) = 
  \begin{cases}
  A_1'x + y_1, \; f'x + g > 0 \; (R_1) \\
  A_2'x + y_2, \; f'x + g < 0 \; (R_2)
  \end{cases}
\end{equation}
where $f$ is an $n$-dimensional vector, and $A_1$ and $A_2$ are $n \times d$
matrices.  We want to show that the null spaces of $A_1'$ and $A_2'$ share a
common basis $U_0$ and are spanned by $U_0$ and single vectors $u_1$ and $u_2$,
respectively.

We can find a common basis easily --- just solve $ [A_1 \; A_2]' v = 0.  $ But that
does not guarantee that there are single vectors $u_1$ and $u_2$ that suffice
for spanning the respective null spaces along with the common elements.  The
continuity of $\ell$ is critical for that purpose.

\begin{proposition}
 \label{thm:null-space-decomposition}
 Suppose $\ell$ is continuous and $f$ is not contained in either the column
 space of $A_1$ or $A_2$ and $A_1$ and $A_2$ are full rank.  Then there exists a
 semi-orthonormal matrix $U_0$ and unit vectors $u_1$ and $u_2$, such that $u_i$ and the
 columns of $U_0$ span the null space of $A_i'$ and $u_i' U_0 = 0$, $i=1,2$.
\end{proposition}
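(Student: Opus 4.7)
My strategy is: first, use continuity of $\ell$ to extract a rank-one relation between $A_1$ and $A_2$; then analyze the intersection $\ker(A_1') \cap \ker(A_2')$ with a dimension count; finally build $U_0$, $u_1$, $u_2$ by orthogonal complements inside $\ker(A_i')$.

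On the boundary hyperplane $H = \{x : f'x + g = 0\}$, continuity of $\ell$ forces $(A_1 - A_2)'x = y_2 - y_1$ for every $x \in H$. Differencing two such identities gives $(A_1 - A_2)'v = 0$ for every $v$ in the direction subspace $f^\perp$ of $H$. Thus every column of $A_1 - A_2$ lies in $\text{span}(f)$, and there exists $c \in \bbR^d$ with $A_1 - A_2 = f c'$.

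Let $N_0 := \ker(A_1') \cap \ker(A_2')$. For $v \in \ker(A_1')$ we have $A_2' v = A_1' v - c(f' v) = -c(f' v)$, so $v \in N_0$ iff $c(f' v) = 0$. When $c = 0$ the two null spaces coincide and the conclusion is immediate. Otherwise $N_0 = \ker(A_1') \cap f^\perp$; the hypothesis $f \notin \text{col}(A_1)$ gives $\text{col}(A_1) \cap \text{span}(f) = \{0\}$, equivalently $\ker(A_1') + f^\perp = \bbR^n$, so the dimension formula yields
\[
\dim N_0 = (n-d) + (n-1) - n = n - d - 1.
\]
The analogous count via $f \notin \text{col}(A_2)$ shows that $N_0$ has codimension one inside $\ker(A_2')$ as well.

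To conclude, take $U_0$ to be any matrix whose columns form an orthonormal basis of $N_0$. Inside $\ker(A_i')$ the orthogonal complement of $N_0$ is one-dimensional; let $u_i$ be a unit vector spanning it. By construction $u_i' U_0 = 0$ and $\{u_i\}$ together with the columns of $U_0$ form an orthonormal basis of $\ker(A_i')$. The main obstacle is the first step: recognizing that continuity is not a pointwise but an affine-hyperplane-wide constraint, which forces the rank-one factorization $A_1 - A_2 = fc'$. The non-containment hypotheses are used in exactly one place --- the dimension formula --- to guarantee that $N_0$ sits as a codimension-one subspace of each $\ker(A_i')$, rather than exhausting it, so that the extra vectors $u_1$ and $u_2$ genuinely exist.
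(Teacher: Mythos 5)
Your proof is correct. The paper reaches the same conclusion by a coordinate change rather than by your algebraic factorization: it forms $C_1 = [A_1 \; f]$, takes its QR decomposition with $Q = [Q_1 \; Q_0]$, writes $x = Qz$, and observes that continuity on the boundary must hold for every value of the free coordinate $z_2$, which forces $A_2' Q_0 = 0$; it then sets $U_0 = Q_0$ and takes $u_i$ to be the $(d+1)$st column of the orthogonal factor in the QR decomposition of $[A_i \; f]$. Your route replaces this with the sharper identity $A_1 - A_2 = f c'$, obtained by differencing the continuity condition along the hyperplane, followed by an intrinsic dimension count on $N_0 = \ker(A_1') \cap \ker(A_2')$. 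The two constructions yield the same subspace for $U_0$ (when $c \neq 0$ one has $N_0 = \ker(A_1') \cap f^\perp = (\mathrm{col}[A_1 \; f])^\perp$), but your version has a few small advantages: it isolates exactly where each hypothesis enters ($f \notin \mathrm{col}(A_i)$ is used only in the two dimension counts), it verifies directly that $u_2' U_0 = 0$ --- a point the paper leaves implicit, since there one must first note that $\mathrm{col}[A_2 \; f] = \mathrm{col}[A_1 \; f]$ before the $(d+1)$st column of the second QR factor can be seen to be orthogonal to $Q_0$ --- and it treats the degenerate case $A_1 = A_2$ explicitly. What the paper's formulation buys in exchange is immediacy of computation: the QR-based description of $U_0$, $u_1$, $u_2$ is what the accompanying pseudocode (e.g.\ \texttt{ContinuityCheck} and \texttt{resid}) implements directly.
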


\begin{proof}
  Make the requisite assumptions and construct $C_1 = [A_1 \; f]$.  Since $A_1$ is
  full rank and $f$ is not in the column space of $A_1$, we know $C_1$ is full
  rank.  Let $Q_1 R_1$ be the QR decomposition of $C$ and let $Q_0$ be the
  orthogonal complement to $Q_1$ so $Q = [Q_1 \; Q_0]$ is unitary.  Instead of
  working in $x$-space, consider the transformation $x = Qz$ which is
  $Q_1 z_1 + Q_0 z_2$ after decomposing $z$ into two components.  The contraint
  $\ell(x) = 0$ for region 1 becomes
  \[
    A_1' Q_1 z_1 + y_1 = 0
  \]
  where $z_2$ is free.  Similarly, at the boundary $f'x + g = 0$, we have
  $f'Q_1z_1 + g = 0$ and $z_2$ is free.  Hence, $z_1$ must satisfy
  \begin{align*}
    [A_1 \; f]' Q_1 z_1 + b_1 & = 0 \\
    R_1' z_1 + b_1 & = 0.
  \end{align*}
  where $b_1' = [y_1' \; g]$.  The continuity at the boundary requires that
  \[
    A_2' Q z + y_2 = A_2' Q_1 z_1 + A_2' Q_0 z_2 + y_2 = A_2' Q_0 z_2 + (A_2'
    Q_1 z_1 + y_2) = 0
  \]
  for all $z_2$.  Since this must hold for all $z_2 \in \bbR^{n-d-1}$, we must
  have both
  \begin{gather}
    A_2' Q_0 = 0, \label{eqn:cont-cond-1} \\
    A_2' Q_1 z_1 + y_2 = 0. \label{eqn:cont-cond-2}
  \end{gather}
  Equation (\ref{eqn:cont-cond-2}) is an algebraic relationship that must be
  satisfied.  Equation (\ref{eqn:cont-cond-1}) is more critical and it says that
  $A_2'$ must have a null space that includes the span of $Q_0$.  (Alternatively,
  $A_2$ must have columns that reside in the span of $Q_1$, that is, $A_2$ can
  be constructed using the columns of $A_1$ and $f$.)

  Lastly, how do we construct $u_1$, $u_2$ and $U_0$?  Equation
  (\ref{eqn:cont-cond-1}) ensures that $A_2$ is orthogonal to $Q_0$, hence
  $U_0 = Q_0$ characterizes the shared basis of the null spaces of $A_1'$ and
  $A_2'$.  The QR decomposition of $[A_1 \; f]$ motivates the rest.  In
  particular, note that the $(d+1)$th column of $Q_1$, call it $u_1$ is in the null
  space of $A_1'$.  Similarly, let $[A_2 \; f] = Q_2 R_2$ be the analogous QR
  decomposition for $A_2$.  Then $u_2$, the $(d+1)$th column of $Q_2$, will be in
  the null space of $A_2$.
  
\end{proof}

\begin{corollary}
  \label{thm:subspace-decomposition}
  Given the same suppositions as in Proposition
  \ref{thm:null-space-decomposition}, there are unit vectors $u_1$ and $u_2$ and
  semi-orthonormal matrices $U_0$ and $U_c$ such that
  \begin{enumerate}
    \item The span of $U_0$, $u_1$, $u_2$, and $U_c$ is all of $\bbR^n$.
    \item $U_0' U_c = 0$ and $u_i' U_0 = u_i' U_c = 0$, $i = 1, 2$.
    \item The span of $u_i$ and the columns of $U_0$ span the null space of
      $A_i'$ and $u_i'U_0 = 0$, i=1, 2.
    % \item The columns of $U_c$ are all in the column space of $A_i$ and $u_i' U_c = 0$, i=1,2.
    \item Further, if $A_1$ and $A_2$ have the same column span, then
      \begin{enumerate}
        \item $u_1 = \pm u_2$;
        % \item $U_c$ is $n \times d$ and has the same column space as $A_1$ and $A_2$;
        \item $[U_0 \; u_1 \; U_c]$ is an orthonormal matrix.
        \end{enumerate}
    \item if $A_1$ and $A_2$ do not have the same column span, then
      \begin{enumerate}
      \item $U_c$ is $n \times (d-1)$;
      % \item The span of $u_2$ and $U_c$ is the span of $A_1$;
      % \item The span of $u_1$ and $U_c$ is the span of $A_2$;
      \item $[U_0 \; u_b \; u_\perp \; U_c]$ --- where $u_b = (u_1 - u_2) / \|u_1 - u_2\|$ and
        $u_\perp = (u_1 + u_2) / \|u_1 + u_2\|$ --- is an orthonormal matrix.
      \end{enumerate}
    \end{enumerate}
  \end{corollary}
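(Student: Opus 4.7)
The plan is to build directly on Proposition \ref{thm:null-space-decomposition}, which already supplies $U_0$, $u_1$, and $u_2$ satisfying item 3 (and the orthogonality $u_i'U_0 = 0$). What remains is to construct $U_c$, check orthogonality (item 2), verify that everything spans $\mathbb{R}^n$ (item 1), and then split into the two cases (items 4 and 5) based on whether $A_1$ and $A_2$ share a column span. Since $U_0 = Q_0$ is $n \times (n-d-1)$ (coming from the QR decomposition of the $n \times (d+1)$ matrix $[A_1 \; f]$), the dimensions for $U_c$ will be forced by counting.

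First, I would settle the dichotomy on column spans. The null space of $A_i'$ is spanned by $U_0$ together with $u_i$, so $\text{col}(A_1) = \text{col}(A_2)$ exactly when the two null spaces coincide, i.e.\ when $u_2 \in \text{span}(U_0, u_1)$. Combined with $u_2 \perp U_0$ and $\|u_2\| = 1$, this forces $u_2 = \pm u_1$, proving 4(a). In that case, I define $U_c$ as any semi-orthonormal basis for the orthogonal complement of $\text{span}(U_0, u_1)$, which has dimension $n - (n-d-1) - 1 = d$; then $[U_0 \; u_1 \; U_c]$ is square with orthonormal columns, giving 4(b), and items 1--3 follow immediately by construction.

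For case 5, when the column spans differ, the same argument shows $u_1$ and $u_2$ are linearly independent (indeed, not parallel), so $\text{span}(U_0, u_1, u_2)$ has dimension $n-d+1$. I define $U_c$ as a semi-orthonormal basis of its orthogonal complement, which has dimension $d-1$, proving 5(a). The vectors $u_b$ and $u_\perp$ lie in $\text{span}(u_1, u_2) \subseteq U_0^\perp \cap U_c^\perp$; a one-line computation
\[
(u_1 - u_2)'(u_1 + u_2) = \|u_1\|^2 - \|u_2\|^2 = 0
\]
shows $u_b \perp u_\perp$, and each is a unit vector by construction. So $[U_0 \; u_b \; u_\perp \; U_c]$ has orthonormal columns, and its column count is $(n-d-1) + 1 + 1 + (d-1) = n$, giving 5(b). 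Items 1--3 again follow by construction since $\text{span}(U_0, u_b, u_\perp) = \text{span}(U_0, u_1, u_2)$.

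The only mildly subtle step is the dimension bookkeeping that distinguishes the two cases and pins down the width of $U_c$; everything else is a direct reuse of Proposition \ref{thm:null-space-decomposition} together with standard orthogonal-complement constructions. I expect no real obstacle beyond carefully verifying that the $u_2 \in \text{span}(U_0, u_1)$ criterion is equivalent to equality of column spans, which uses that $U_0 \oplus \mathbb{R}u_i$ is the null space of $A_i'$ and that the column span is the orthogonal complement of the null space.
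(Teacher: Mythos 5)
Your proof is correct and follows essentially the same route as the paper's: take $U_0$, $u_1$, $u_2$ from Proposition \ref{thm:null-space-decomposition} and let $U_c$ be an orthonormal basis of the orthogonal complement of $\mathrm{span}(U_0, u_1, u_2)$. The paper states this in one line and asserts that (1)--(5) follow immediately; you simply supply the dimension counting, the equivalence of column-span equality with $u_2 = \pm u_1$, and the orthogonality check $(u_1 - u_2)'(u_1 + u_2) = 0$ that the paper leaves implicit.
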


  \begin{proof}
  By Proposition (\ref{thm:null-space-decomposition}), we have computed $U_0$,
  $u_1$ and $u_2$.  Let $U_c$ be the matrix representation of an orthonormal basis of
  the null space of $[U_0 \; u_1 \; u_2]'$.  This decomposition immediately
  yields (1) - (5).
    \end{proof}

\section{Distribution of a truncated Gaussian conditioned on a hyperplane}
\label{sec:linear-equality-constrained-gaussian}

Here we recall some basic facts about a Gaussian distribution and truncated
Gaussian distribution, conditional on the sample lying in some plane.

Suppose that $x \sim N(\mu, \Sigma)$ and we want to find the distribution of
$(x | A' x = y)$, when $A$ is full rank.  To find the conditional distribution,
we will transform the problem to a different vector of random variables.

Let $Q = [Q_1 \; Q_2]$ be a unitary matrix such that $Q_1$ spans the column space
of $A$ and $Q_2$ spans the null space.  We want to work with $x = Q z$.  Whether
it be by completing the square or using the properties of Gaussian random
variables (and it will matter that we can complete the square later), the
distribution of $z$ is
\begin{align*}
  z & \sim N(\tilde \mu, \tilde \Sigma) \\
  \tilde \mu & = Q' \mu \\
  \tilde \Sigma & = Q' \Sigma Q.
\end{align*}
And, again, by either completing the square or using the properties of Gaussian
random variables, we can decompose the distribution of $z$ as $p(z_2|z_1)p(z_1)$
where $\Omega = \tilde \Sigma^{-1}$,
\begin{align*}
  (z_2|z_1) & \sim N(\tilde m(z_1), \tilde V) \\
  \tilde m(z_1) & = \tilde \mu_2 + A (z_1 - \tilde \mu_1) \\
  A & = \tilde \Sigma_{21} \tilde \Sigma_{11}^{-1} = \Omega_{22}^{-1} \Omega_{21} \\
  \tilde V & = \tilde \Sigma_{22} - A \tilde \Sigma_{11} A' = \Omega_{22}^{-1}.
\end{align*}
We want to sample $(z | A' Q z = y)$, which is to say $(z | A' Q_1 z_1 = y)$.
In this case, $z_1$ is known with certainty and is $z_1 = (A'Q_1)^{-1} y$ and we
can sample $z_2$ from the conditional distribution above.  We can return to the
original random vector $x$ by $x = Qz$ so that
\begin{align*}
  (x | A'x = y) & \sim N(m(z_1^*), V) \\
  m(z_1^*) & = Q_1 z_1^* + Q_2 (\tilde \mu_2 + A ( z_1^* - \tilde \mu_1)) \\
  V & = Q_2 \tilde V Q_2' \\
  z_1^* & = (A'Q_1)^{-1} y.
\end{align*}

\section{Pseudocode}

\SetKwComment{Comment}{/* }{ */}

\RestyleAlgo{ruled}

\begin{algorithm}
  \caption{ODEParam}
  \label{alg:ODEParam}
  % Can phrase in terms of \Sigma, mu, can phrase in terms of M, r
\KwData{$M$, $r$, $A$, $y$, \text{mean}=False}
\KwResult{$x_p$, $Q$, $S$}
% Check \dot x_0 is perp to Q_1
$n \gets \texttt{nrow}(M)$ \;
$d \gets \texttt{ncol}(A)$ \;
$Q, R \gets \texttt{QR}(A, \texttt{mode='complete'})$ \;
$Q_1 \gets Q[{:},0{:}d]$ \;
$Q_2 \gets Q[{:},d{:}]$ \;
$R_1 \gets R[0{:}d, {:}]$ \;
$\Omega_{22} \gets Q_2' M Q_2$ \Comment*[l]{$\Omega_{22} = UU' \implies Q_2 \Omega_{22}^{-1} Q_2 = SS'$ where
  $S = Q_2 {U'}^{-1}$}
$U \gets \texttt{cholesky($\Omega_{22}$, lower=False)}$ \;
$S' \gets \texttt{solve\_triangular($U$, $Q_2'$, trans=$1$, lower=False)}$
\Comment*[r]{$U' S' = Q_2'$}
% $S \gets Q_2 Q_2' S$ \;
$z_1 \gets \texttt{solve\_triangular($R_1$, $-y$, trans=$1$, lower=False)}$
\Comment*[r]{$R' z_1 + y = 0$}
$x_1 \gets Q_1 z_1$ \;
\eIf{\texttt{mean}}{
$\tilde r \gets M (r - x_1)$ \;
}{
$\tilde r \gets r - M x_1$ \;
}
$x_p \gets S (S' \tilde r) + x_1$ \;
\end{algorithm}

\begin{algorithm}
  \caption{IsotropicODEParam: using $M = \phi I_n$ and mean $r = \mu$ in
    Alg \label{alg:ODEParamIso}}
\KwData{$\phi$, $\mu$, $A$, $y$}
\KwResult{$x_p$, $Q$, $S$}
% Check \dot x_0 is perp to Q_1
$n \gets \texttt{ncol}(\mu)$ \;
$d \gets \texttt{nrow}(A)$ \;
$Q, R \gets \texttt{QR}(A, \texttt{mode='complete'})$ \;
$Q_1 \gets Q[:,0:d]$ \;
$Q_2 \gets Q[:,d:]$ \;
$S \gets Q_2$ \;
$z_1 \gets \texttt{solve\_triangular($R$, $-y$, trans=$1$, lower=False)}$
\Comment*[r]{$R' z_1 + y = 0$}
$x_1 \gets Q_1 z_1$ \;
$\tilde r \gets \phi (\mu - x_1)$ \;
$x_p \gets S (S' \tilde r) + x_1$ \;
\end{algorithm}

\begin{algorithm}
  \caption{GetODEParamForRegion}
  \label{alg:CachedODEParam}
  \KwData{$j$, \texttt{cache}, $M$, $r$, $A$, $y$, \texttt{mean}=False}
  \KwResult{$x_p$, $Q$, $S$}
  \If{\texttt{cache}$[j]$ is Null}{
    \texttt{cache}[j] = ODEParam($M[j]$, $r[j]$, $A[j]$, $y[j]$, \texttt{mean})
  }
  $(x_p, Q, S) \gets cache[j]$\;
\end{algorithm}

\begin{algorithm}
  \caption{GetBoundariesForRegion}
  % \label{alg:GetBoundaries}
  \KwData{$j$, $F$, $g$, $L$}
  \KwResult{$F_j, g_j, L_j$}
  $active \gets L[j,{:}] \neq 0$ \;
  $signs \gets \texttt{sign}(L[j,active])$ \;
  $F_j \gets F[active, {:}] \odot signs$ \;
  $g_j \gets g[active] \odot signs$ \;
  $L_j \gets L[j,active]$ \;
  \end{algorithm}

\begin{algorithm}
  \caption{GetPotentialForRegion}
    \KwData{$j$, $M$, $r$, $k$}
  $V \gets x' M[j] x + x' r[j] + k[j]$ \;
\end{algorithm}

\begin{algorithm}
  \caption{ContinuityCheck}
  \KwData{$f$, $g$, $A_1$, $A_2$, $y_1$, $y_2$, $tol$}
  \KwResult{$ok$}
  $B_1 \gets [A_1 \; f]$\;
  $d \gets \texttt{ncol}(B)$\;
  $(Q, R) \gets \texttt{QR}(B_1, \texttt{mode='complete'})$\;
  $Q_1 \gets Q[{:}, 0{:}d]$ \;
  $Q_0 \gets Q[{:}, d{:}]$ \;
  $R_1 = R[0{:}d, {:}]$\;
  $z_1 = \texttt{solve\_triangular}(R_1, -y_1, \texttt{trans=1},
  \texttt{lower=False})$
  \Comment*[r]{$R_1' z_1 + y_1 = 0$}
  $e_1 \gets \| A_2' Q_1 z_1 + y_2\|$ \Comment*[r]{$d \times 1$}
  $e_2 \gets \| A_2' Q_0\| $ \Comment*[r]{$d \times (n-d)$}
  $ok \gets (e_1 < tol)$ and $(e_2 < tol)$ \;
\end{algorithm}

\begin{algorithm}
  \caption{ODECoef}
  \KwData{$x_p$, $S$, $x_0$, $\dot x_0=\texttt{Null}$}
  \KwResult{$x_p$, $S$}
  $b \gets x_0 - x_p$ \;
\eIf{$\dot x_0$}{
  $a \gets \dot x_0$
  }{
    $\ep \sim N(0, I_{n-d})$ \;
    $a \gets S \ep$ \;
  }
\end{algorithm}

\begin{algorithm}
  \caption{EvolveToBoundary}\label{alg:two}
  \KwData{$t_{max}$, $x_p$, $a$, $b$, $F$, $g$, $L$, $j$}
  \KwResult{$\tau^*$, $j^*$, $f$, $x$, $\dot x$}
  $Fa \gets F * a$ \;
  $Fb \gets F * b$ \;
  $\phi \gets \texttt{arctan2}(-Fa, Fb)$ \;
  $h \gets F \; \mu + g$ \;
  $u \gets \sqrt{(Fa)^2 + (Fb)^2}$ \;
  $\tau \gets [2 * t_{max} + 2 \pi] * J$ \Comment*[r]{We have to be careful with $\tau$}
  $\tau \gets \texttt{arccos}(-h/u, \texttt{out}=\tau, \texttt{where}=u \geq |h|)$ \;
  $\tau \gets \texttt{where}(\tau < 0, \tau + \pi, \tau)$ \;
  $\tau \gets\texttt{where}(\tau > \pi, \tau - \pi, \tau)$ \;
  $dK \gets Fa \odot \cos(\tau) - Fb \odot \sin(\tau)$ \;
  $\tau \gets \texttt{where}(dK > 0, \tau + \pi, \tau)$ \;
  $\tau^* = t_{max}$, $j^* = j$, 
  $f = a$ \Comment*[r]{Default: does not hit boundary}
  \If{$\texttt{any}(\tau \leq t_{max})$}{
    \Comment*[r]{We need a custom argmin to deal with hitting a corner}
    $k^* \gets \texttt{argmin}(\tau)$ \;
    $j^* \gets |L[j,k^*]|$ \Comment*[r]{Need to subtract 1 if using C-indexing}
    $\tau^* \gets \tau[k^*]$ \;
    $f = F[k^*]$ \;
  }
  $x = x_p + a \sin(\tau^*) + b \cos(\tau^*)$ \;
  $\dot x = a \cos(\tau^*) - b \sin(\tau^*)$ \;
\end{algorithm}

% Need custom arg minimizer.  Particle travels into region in which it has the
% most velocity.  So for which $\dot x' f$ is most negative (assuming $f$ points inwards)

\begin{algorithm}
  \caption{WallDynamics}
  % \KwData{$x$, $\dot x$, $f$, $g$, $A_1$, $y_1$}
  \KwData{$x$, $\dot x$, $u_1$}
  \KwResult{$\dot x_{new}$}
  % $B_1 \gets [A_1 \; f]$ \;
  % $(Q_1, R_1) \gets \texttt{QR}(B_1)$ \;
  % $u_1 \gets Q_1[{:},-1]$\;
  $v_1 \gets u_1' \dot x$ \;
  $\dot x_{new} \gets \dot x - 2 v_1 u_1$ \;
\end{algorithm}

\begin{algorithm}
  \caption{BoundaryDynamics}
  \KwData{$x$, $\dot x$, $j_1$, $j_2$, $u_1$, $u_2$,
    $V_1$, $V_2$, \texttt{check}=False}
  \KwResult{$\dot x_{new}$, $j_{new}$}
  % $B_1 \gets [A_1 \; f]$ \;
  % $B_2 \gets [A_2 \; f]$ \;
  % $(Q_1, R_1) \gets \texttt{QR}(B_1)$ \;
  % $(Q_2, R_2) \gets \texttt{QR}(B_2)$ \;
  % $u_1 \gets Q_1[{:},-1]$\;
  % $u_2 \gets Q_2[{:},-1]$\;
  $v_1 \gets u_1' \dot x$ \;
  $E_{v_1} \gets \frac{1}{2} v_1^2$ \;
  $\Delta V \gets V_2 - V_1$\;
  $\dot x_{new} \gets \dot x - v_1 u_1$ \;
  \eIf{$E_{v_1} < \Delta V$}{
    $j_{new} = j_1$ \;
    $\dot x_{new} \gets \dot x_{new} - v_1 u_1$ \;
  }{
    $j_{new} = j_2$ \;
    $\dot x_{new} \gets \dot x_{new} + \sqrt{2 (E_{u_1} - \Delta V)} u_2 $ \;
  }
\end{algorithm}

\begin{algorithm}
  \caption{resid}
  \KwData{$f$, $d$, $Q$}
  \KwResult{$x$}
  $Q_1 = Q[{:},0{:}d]$ \;
  $x \gets f - Q_1 Q_1' f$\;
  \end{algorithm}

\begin{algorithm}
  \caption{Evolve1}
  \KwData{$t_{max}$, $j$, $x_0$, $\dot x_0$, $M$, $r$, $A$, $y$, $L$, $F$, $g$, \texttt{mean}}
  \KwResult{$x$, $\dot x$, $\tau^*$, $j^*$}
  $(x_p, Q, S) \gets $GetODEParam($j$, \texttt{cache}, $M$, $r$, $A$, $y$,
  \texttt{mean}=False) \;
  $a, b \gets$ ODECoef($x_p$, $S$, $x_0$, $\dot x_0$) \;
  $F_j, g_j, L_j \gets $ GetBoundariesForRegion \;
  $\tau^*, j^*, f, x, \dot x \gets$ EvolveToBoundary($\tau$, $x_p$, $a$, $b$, $F_j$,
  $g_j$, $L_j$, $j$) \;
  \If{$\tau^* > 0$}{
    $u_1 \gets \texttt{resid}(f, d, Q)$ \Comment*[r]{$u_1 = Q[{:}, -1]$ from QR of $[A[j] \; f]$}
    \eIf{$\texttt{hard\_boundary} \gets j^* = j$}{
      $\dot x \gets$ WallDynamics($x$, $\dot x$, $u_1$) \;
    }{
      $(x_p, Q, S) \gets $GetODEParam($j^*$, \texttt{cache}, $M$, $r$, $A$, $y$, \texttt{mean}=False) \;
      $V_1 \gets V_j(x)$ \;
      $V_2 \gets V_{j^*}(x)$ \;
      $u_2 \gets \texttt{resid}(f, d, Q)$ \Comment*[r]{$u_1 = Q[{:}, -1]$ from QR of $[A[j^*] \; f]$}
      $\dot x \gets$ BoundaryDynamics($\dot x$, $u_1$, $u_2$, $V_1$, $V_2$) \;
    }
  }
\end{algorithm}

\begin{algorithm}
  \caption{Evolve2}
  \KwData{$t_{max}$, $j$, $x_0$, $\dot x_0$, $M$, $r$, $A$, $y$, $L$, $F$, $g$, \texttt{mean}}
  \KwResult{$x$, $\dot x$, $\tau^*$, $j^*$}
  $(x_p, S, Q) \gets $GetODEParam($j$, \texttt{cache}, $M$, $r$, $A$, $y$,
  \texttt{mean}=False) \;
  $a, b \gets$ ODECoef($x_p$, $S$, $x_0$, $\dot x_0$) \;
  $F_j, g_j, L_j \gets $ GetBoundariesForRegion \;
  $\tau^*, j^*, f, x, \dot x \gets$ EvolveToBoundary($\tau$, $x_p$, $a$, $b$, $F_j$,
  $g_j$, $L_j$, $j$) \;
  $Q_1 \gets Q[{:},1{:}d]$ \;
  $u_1 \gets f - Q_1 Q_1' f$ \;
  $(x_p, S, Q) \gets $GetODEParam($j^*$, \texttt{cache}, $M$, $r$, $A$, $y$, \texttt{mean}=False) \;
  $Q_2 \gets Q[{:}, 1{:}d]$ \;
  $pm \gets 1$ if $(\tau^* < t_{max})$ and $(j = j^*)$ else -1 \;
  $u_2 \gets pm (f - Q_2 Q_2' f)$ \;
  $V_1 \gets V_j(x)$ \;
  $V_2 \gets V_{j^*}(x)$ \;
  $\dot x \gets$ BoundaryDynamics($\dot x$, $u_1$, $u_2$, $V_1$, $V_2$) \;
 \end{algorithm}

\begin{algorithm}
  \caption{HMC1}
  \KwData{$N$, $t_{max}$, $j$, $x_0$, $\dot x_0$, $M$, $r$, $A$, $y$, $L$, $F$, $g$, \texttt{mean}}
  \KwResult{$x$, $\dot x$, $t_{remain}$}
  $n \gets \texttt{len}(x_0)$ \;
  $X \gets \texttt{array}((N,n))$ \;
  $\dot X \gets \texttt{array}((N,n))$ \;
  $R \gets \texttt{array}((N,))$ \;
  $x \gets x_0$; $\dot x \gets \dot x_0$,   $t \gets t_{max}$ \;
  \For{$i = 0:N$}{
    \While{$t > 0$}{
      $x$, $\dot x$, $\tau$, $j$ $\gets$ Evolve($t$, $j$, $x$, $\dot x$, $M$,
      $r$, $A$, $y$, $L$, $F$, $g$, \texttt{mean}) \;
      $t \gets t - \tau$ \;
    }
    $X[i,{:}] \gets x$ \;
    $\dot X[i,{:}] \gets \dot x$ \;
    $R[i] \gets j$ \;
    $t \gets t_{max}$ \;
    $\dot x \gets Null$ \;
  }
\end{algorithm}

\begin{algorithm}
  \caption{HMC2}
  \KwData{$N$, $t_{max}$, $j$, $x_0$, $\dot x_0$, $M$, $r$, $A$, $y$, $L$, $F$, $g$, \texttt{mean}}
  \KwResult{$x$, $\dot x$, $t_{remain}$}
  $n \gets \texttt{len}(x_0)$ \;
  $X \gets \texttt{array}((N,n))$ \;
  $\dot X \gets \texttt{array}((N,n))$ \;
  $R \gets \texttt{array}((N,))$ \;
  $I \gets \texttt{array}((N,))$ \;
  $x \gets x_0$; $\dot x \gets \dot x_0$,   $t \gets t_{max}$ \;
  \For{$i = 0:N$}{
    $x$, $\dot x$, $\tau$, $j$ $\gets$ Evolve($t$, $j$, $x$, $\dot x$, $M$, $r$, $A$, $y$, $L$, $F$,
    $g$, \texttt{mean}) \;
    $t \gets t - \tau$ \;
    $X[i,{:}] \gets x$ \;
    $\dot X[i,{:}] \gets \dot x$ \;
    $R[i] \gets j$ \;
    $I[i] \gets t = 0$\;
    \If{ $t = 0$ }{
      $t \gets t_{max}$ \;
      $\dot x \gets Null$ \;
    }
  }
\end{algorithm}

\end{document}